\documentclass[11pt]{article}

\parindent5mm
\parskip5pt plus2pt minus1pt
\textheight=21.5cm 
\textwidth=15.0cm 
\topmargin=-0.25in
\headheight=0.5in 
\headsep=0in 
\oddsidemargin=0.25in

\usepackage{caption}
\captionsetup{font=footnotesize}

% %Johanna packages
% \usepackage{enumerate}
\usepackage{graphicx}
\graphicspath{{Figures/}}
\usepackage{lscape}
\usepackage{rotating}
\usepackage[table]{xcolor}
 \usepackage{amsmath,amsfonts,amssymb,mathrsfs,amsthm}
\usepackage{bm}
\usepackage{natbib}
\usepackage{booktabs}
% \usepackage{trees}
% \usepackage{textcomp}
%\usepackage{draftwatermark}
%\SetWatermarkText{DRAFT}
\usepackage{pdflscape}
\usepackage{todonotes}

\usepackage{authblk}

\newtheorem{theorem}{Theorem}

\theoremstyle{definition}

 %\newtheorem{proof}{Proof}

% \newtheorem{remarks}[proposition]{Remarks}
% %\newtheorem{algorithm}[proposition]{Algorithm}
% \newtheorem{fallacy}{Fallacy}
% %\newtheorem{axiom}{Axiom}%[chapter]

\newtheorem{assumption}{Assumption}
% \newtheorem{condition}[proposition]{Condition}
% \newenvironment{textitem}[1]{\vspace{0.1cm}\noindent \textbf{#1.~}}{\vspace{0.1cm}}

% For embedding notes.  \noteM is for Michael, \noteA for Alex, \noteY for Yen.
% Write \noteM{note text} for short note in margin, \noteM[inline]{note text} for inline comment.
% Do not use in footnotes!  Todonotes does not handle this correctly. 

% Macros added by Michael to facilitate clean notation 

   % loss operator

    % iid Bernoulli 1/2 variables

% ARMA parameters

% For the GJR-GARCH process

% Macros added by Alex to facilitate clean notation 

% McNeil-Frey-Embrechts Macros

\newcommand{\E}{\mathbb{E}}
\newcommand{\R}{\mathbb{R}}

\newcommand{\N}{\mathbb{N}}

\newcommand{\Sigmanotcov}{\Omega}

\newcommand{\var}{\operatorname{var}}
\newcommand{\VaR}{\operatorname{VaR}}
\newcommand{\cov}{\operatorname{cov}}
\newcommand{\sd}{\operatorname{sd}}
\newcommand{\ES}{\operatorname{ES}}

\renewcommand{\leq}{\leqslant}
\renewcommand{\geq}{\geqslant}

\newcommand{\rd}{\mathrm{d}}

\newcommand{\alphanew}{\ensuremath{\theta}}

% ----------------------------------------------------------------
\begin{document}
\title{On the Basel Liquidity Formula for Elliptical Distributions}

\author{Janine Balter}
\affil{Deutsche Bundesbank\footnote{The opinions expressed
    in this paper are those of the author and do not necessarily
    reflect views shared by the Deutsche Bundesbank or its staff.}}

\author{Alexander J.\ McNeil}
\affil{The York Management School, University of York\footnote{Corresponding author: Alexander J.\ McNeil, The York Management School,
  University of York, Freboys
  Lane, York YO10 5GD, UK, +44 (0) 1904 325307, \texttt{alexander.mcneil@york.ac.uk}}}

\maketitle
\begin{abstract}
A justification of the Basel liquidity formula for risk capital in the
trading book is given under the assumption that market risk-factor
changes form a Gaussian white noise process over 10-day time steps
and changes to P\&L are linear in the risk-factor changes.  A generalization of the formula is derived
under the more general assumption that risk-factor changes are
multivariate elliptical. It is shown that the Basel formula tends
to be conservative when the elliptical distributions are from the heavier-tailed
generalized hyperbolic family. As a by-product of the analysis a Fourier approach to
calculating expected shortfall for general symmetric loss
distributions is developed.
\end{abstract}

%\noindent {\it JEL}\/ Codes: C52; G21; G28; G32\\
\noindent {\it Keywords}\/: Basel accords; liquidity
risk; risk measures; expected shortfall; elliptical distributions;
generalized hyperbolic distributions

\clearpage

\section{Introduction}
As a result of the fundamental review of the trading book (FRTB) \citep{bib:basel-13} a new minimum capital standard for the trading
book has emerged \citep{bib:basel-16}. Under this standard banks are now required to calculate a
liquidity-adjusted expected shortfall risk measure on a daily
basis. This calculation is carried out at both the level of the whole
trading book and the level of individual desks using an aggregation formula that is
based on the concepts of liquidity horizons and square-root-of-time
scaling.

Every risk factor affecting the value of positions in the trading book
or desk is assigned to a unique liquidity horizon $\text{LH}_j$ which may be 10, 20, 40, 60
or 120 days. These horizons are conservative estimates of the amount
of time that would be required to execute trades that would eliminate
the portfolio's sensitivity to
changes in these risk
factors during a period of market
illiquidity. For example, risk factors for the equity price risk of
large-cap stocks are assigned to the shortest horizon of 10 days;
equity volatility risk factors for large-cap stocks are given a risk
horizon of 20 days; risk factors for structured credit instruments
have the longest liquidity horizon of 120 days.

The liquidity formula reflects the prevailing method
of risk calculation in the banking industry in which changes in P\&L
for trading book positions
are modelled in terms of sensitivities to risk factors. Expected
shortfall charges are calculated with respect to shocks to risk
factors with particular liquidity horizons while other risk factors are
held constant. To make the calculation explicit, we give the formula and notation as published on page 52 of the
revised capital standard~\citep{bib:basel-16}.
\begin{itemize}
\item let $T=\text{LH}_1$ denote the so-called base liquidity horizon
  of 10 days.
\item Let $\text{ES}_T(P)$ denote the expected shortfall at horizon
  $T$ for a portfolio $P$ with respect to shocks to all risk factors
  to which the positions in the portfolio are exposed.
\item Let $\text{ES}_T(P,j)$ denote the expected shortfall at horizon
  $T$ for a portfolio $P$ with respect to shocks to the risk factors
  which have a liquidity horizon of length $\text{LH}_j$ or greater,
  with all other risk factors held fixed.
\end{itemize}
The liquidity-adjusted expected shortfall is
\begin{equation}
  \label{eq:15}
  \text{ES} = \sqrt{\left(\ES_T(P)\right)^2 + \sum_{j\geq 2}\left( \ES_T(P,j)\sqrt{\frac{\text{LH}_j-\text{LH}_{j-1}}{T}}\right)^2}\,.
\end{equation}

The first objective of this paper is to provide a principles-based
derivation of this formula that relates it to the concept of expected shortfall as a risk measure applied to
a loss distribution or P\&L distribution. Most practitioners know that
an assumption of normality underlies the
formula. We make it precise that the formula can be justified by
assuming that risk-factor changes over time steps equal to the base
liquidity horizon form a multivariate Gaussian white
noise with mean zero and portfolio losses are all attributable to first-order
(delta) sensitivities to the risk-factor changes.

The second and major objective of the paper is to analyse the formula under the
more general assumption that risk-factor changes have a multivariate
elliptical distribution. This allows us to consider some particular
cases with heavy tails and
tail dependencies that might be considered more realistic models for
market risk-factor changes. 

Many results in QRM continue to hold when multivariate
normal assumptions are generalized to multivariate elliptical
assumptions. In particular, when losses are linear in a set of
underlying elliptically-distributed risk factors, aggregation of risk measures across
different business lines, desks or risk factors can generally be based on a common
formulaic approach, regardless of the exact choice of elliptical
distribution; see Chapter 8 of~\citet{bib:mcneil-frey-embrechts-15}. The
difference in the current paper is that aggregation takes place, not
only across risk factors, but also across time and therefore a `central limit
effect' takes place. We will show that~\eqref{eq:15} is
in fact a conservative aggregation rule for the popular generalized
hyperbolic family of heavier-tailed
elliptical assumptions and we will give a
generalization of the rule that holds for all elliptical distributions.

As a by-product of our analyses we also demonstrate a new 
approach to calculating VaR and expected shortfall for 
symmetric distributions with a known characteristic function. This
approach is particularly useful in cases where we take convolutions of elliptically
distributed random vectors and lose the ability to write simple
closed-form expressions for their probability densities.

We present all ideas in terms of the standard probabilistic approach
to risk measures. Losses (or P\&L variables) are represented by random
variables $L$. Expected shortfall ($\ES_\alpha$) and value-at-risk
($\VaR_\alpha$) at level $\alpha$ are risk measures
applied to $L$. If $F_L$ denotes the distribution function of $L$ and $q_\alpha$
the corresponding quantile function, they
are given by $\VaR_\alpha(L) = q_\alpha(F_L)$ and $\ES_\alpha(L) = \tfrac{1}{1-\alpha}
\int_\alpha^1 q_u(F_L) \rd u$. If $F_L$ is continuous then the formula
$\ES_\alpha(L) =\E(L \mid L  \geq \VaR_\alpha(L))$ also holds.

\section{Justifying and extending the Basel liquidity formula}

Let $(\bm{X}_t)$ be a $d$-dimensional time series of risk-factor
changes for all relevant risk factors and assume that these are all
defined in terms of simple differences or log-differences. We
interpret $\bm{X}_{t+1}$ as the vector of risk-factor changes over the
time step $[t,t+1]$. In practice this time step will be equal to the
base liquidity horizon of 10 days. 

For $h \in \N$, the risk factor changes over the
time step $[t,t+h]$ are given
additively by
\begin{equation}\label{eq:5}
  \bm{X}_{[t,t+h]} := \sum_{j=1}^{h}
\bm{X}_{t+j}.
\end{equation}
Without loss of generality let the risk calculation be made at time
$t=0$. We make the following assumptions.

\begin{assumption}\label{assumption:just-basel-liqu}
  \begin{enumerate}
  \item[(i)]
The risk-factor changes $(\bm{X}_t)$ form a stationary white noise
process (a serially uncorrelated process) with mean zero and covariance
matrix $\Sigma$.
\item[(ii)]
Each risk factor may be assigned to a unique liquidity bucket
$B_k$ defined by a liquidity horizon $h_k \in \N$, $k=1,\ldots,n$.
\item[(iii)]
In the event of a portfolio liquidation action the loss (or profit) attributable to risk factors in
bucket $B_k$ is given by 
$\bm{b}_k^\prime \bm{X}_{[0,h_k]}$ where $\bm{b}_k$ is a
weight vector with zeros in any position that corresponds to a risk
factor that is not in $B_k$.
 \end{enumerate}
\end{assumption}
\noindent Assumption~\ref{assumption:just-basel-liqu}(iii) contains the
linearity assumption and adopts the pessimistic view that
the full liquidity horizon $h_k$ is required to remove the portfolio's
sensitivity to all the risk factors in liquidity bucket $B_k$.

Under these assumptions we compute the portfolio loss $L$ over the
maximum time horizon $h_n$, which is the time required to remove the
portfolio's sensitivity to all risk factors. It follows from
Assumption~\ref{assumption:just-basel-liqu}(ii) and (iii) that
\begin{align}
  L &= \sum_{k=1}^n \bm{b}_k^\prime \bm{X}_{[0,h_k]} = 
 \sum_{k=1}^n \sum_{j=1}^{k}
\bm{b}_k^\prime \bm{X}_{[h_{j-1},h_j]}  
% =  \sum_{j=1}^n \sum_{k=j}^{n}
% \bm{b}_k^\prime \bm{Y}_{[t_{j-1},t_j]} \nonumber \\
% &
= \sum_{k=1}^n \sum_{j=k}^{n}
\bm{b}_j^\prime \bm{X}_{[h_{k-1},h_k]} 
=
\sum_{k=1}^n
\bm{\beta}_k ^\prime \bm{X}_{[h_{k-1},h_k]} 
%= \sum_{k=1}^n L_k 
\label{eq:L}
\end{align}
where $\bm{\beta}_k = \sum_{j=k}^n \bm{b}_j$ and $h_0=0$. The vector
$\bm{\beta}_k$ contains the weights for all risk factors in the union
of liquidity 
buckets $B_k\cup\cdots \cup B_n$.

Let us write $L_k :=
\bm{\beta}_k ^\prime \bm{X}_{[h_{k-1},h_k]}$ for $k=1,\ldots,n$ for
the
summands in the final expression in~\eqref{eq:L}. These
are uncorrelated by Assumption~\ref{assumption:just-basel-liqu}(i) and we
may easily calculate
that
\begin{equation}\label{eq:10}
  \var(L) = \sum_{k=1}^n \var(L_k) = \sum_{k=1}^n
\bm{\beta}_k^\prime \cov(\bm{X}_{[h_{k-1},h_k]}) \bm{\beta}_k
= \sum_{k=1}^n (h_k - h_{k-1})
\bm{\beta}_k^\prime \Sigma \bm{\beta}_k .
\end{equation}
 where the final step follows because~\eqref{eq:5} implies that
%\begin{displaymath}
 $ \bm{X}_{[h_{k-1},h_k]}  = \sum_{j=1}^{h_k-h_{k-1}} \bm{X}_{h_{k-1}+j}$.
%\end{displaymath}

% At time $t=t_0$ the liquidity horizons are given by
% the ordered time points $t_1 < \cdots < t_n$ where $t_k = t_0 + h_k$ and $h_k \in \N$. 

% and
% $\cov( \bm{X}_{[t,t+h]}) = h \Sigma$ due to the assumption of no
% serial correlation. Obviously $\bm{X}_{[t,t+1]} = \bm{X}_{t+1}$.

% Suppose that at time In other
% words, these horizons are expressed as integer multiples $h_k$ of the basic
% time step. 

% For $k=1,\ldots,n$ define mutually exclusive liquidity buckets $B_k$
% containing all
% risk factors that have been assigned a liquidity horizon $h_k$. Assume
% that all risk factors may be assigned to one of these buckets.

We now introduce random variables
\begin{equation}\label{eq:20}
  L^{(k)} = \bm{\beta}_k^\prime \bm{X}_{[0,h_1]} 
\end{equation}
for $k=1,\ldots,n$. These represent losses attributable to all risk factors in the union
of liquidity 
buckets $B_k\cup\cdots \cup B_n$ over the liquidity horizon
$h_1$. Note that the $L_k$ and $L^{(k)}$ variables differ (unless $k=1$). 
Since $\var(L^{(k)}) = h_1 \bm{\beta}_k^\prime \Sigma \bm{\beta}_k$,
we obtain from~\eqref{eq:10} the formula
\begin{equation}\label{eq:1}
  \sd(L) = 
\sqrt{ \sum_{k=1}^n \left(\sqrt{\frac{h_k - h_{k-1}}{h_1}}\sd(L^{(k)})\right)^2}.
\end{equation}

It may be noted that the presence of positive correlation between the
variables $L_k$ in~\eqref{eq:10}, caused by serial correlation in the
underlying risk-factor changes $\bm{X}_{[h_{k-1},h_k]}$, would tend to
lead to
the left-hand side of~\eqref{eq:1} being larger than the right-hand
side. Negative correlation would lead to it being smaller.

% but note that $L^{(k)}\stackrel{d}{=}\left(\sqrt{h_k - h_{k-1}}\right)L_k$ for the Gaussian case. 

% To use this formula it us sufficient to analyse the losses $L^{(k)}$
% over one basic time step $[t_0,t_0+1]$ for the trading book assets in
% $B_k\cup\cdots \cup B_n$, that is all the instruments with a liquidity
% horizon of $h_k$ or longer.

\subsection{The Gaussian case}\label{sec:gaussian-case}
Suppose that $(\bm{X}_t)$ is a Gaussian process; in this case
$(\bm{X}_t)$ is actually a strict white noise (a process of
independent and identically distributed vectors). It follows that $L_k \sim N(0,
  (h_k-h_{k-1})\beta_k^\prime \Sigma \beta_k)$ and the $L_k$ are
  independent for all $k$. Thus, by the convolution property for
  independent normals,
\begin{equation}\label{eq:3}
L \sim N\left(0, \sum_{k=1}^n
  (h_k-h_{k-1})\bm{\beta}_k^\prime \Sigma \bm{\beta}_k\right).
\end{equation}
Moreover, we clearly have $L^{(k)} \sim N(0,h_1\bm{\beta}_k^\prime \Sigma
\bm{\beta}_k)$. 

For any mean-zero normal random variable $V$ it is easy to show that
$\ES_\alpha(V) = c_\alpha \sd(V)$ where $c_\alpha =
\phi(\Phi^{-1}(\alpha))/(1-\alpha)$, $\phi$ denotes the density of the
standard normal distribution and $\Phi^{-1}(\alpha)$ denotes the
$\alpha$-quantile of the standard normal distribution function
$\Phi$~\citep[see][Chapter 2]{bib:mcneil-frey-embrechts-15}. 
It follows from~\eqref{eq:1} that
\begin{equation}\label{eq:4}
  \ES_\alpha(L) = 
\sqrt{ \sum_{k=1}^n \left(\sqrt{\frac{h_k - h_{k-1}}{h_1}}\ES_\alpha(L^{(k)})\right)^2}
\end{equation}
which is the proposed standard formula for the trading
book~\eqref{eq:15} rewritten in our notation.

\subsection{An extension to the formula for elliptical distributions}

In this section we assume a centred elliptical distribution for the
risk-factor changes, which subsumes the multivariate normal
distribution as a special case. In addition to
Assumption~\ref{assumption:just-basel-liqu} we assume that the
following holds.
\begin{assumption}\label{assumption:liqu-form-ellipt}
  \begin{enumerate}
  \item[(i)] The process $(\bm{X}_t)$  is a multivariate strict white
  noise (an iid process).
\item[(ii)] The distribution of $\bm{X}_t$ is elliptical with
location $\bm{0}$, positive-definite dispersion matrix $\Sigmanotcov$ and characteristic generator function
$\psi=\psi(s)$, written $\bm{X}_t \sim
E_d(\bm{0},\Sigmanotcov,\psi)$.
  \end{enumerate}
\end{assumption}

Assumption~\ref{assumption:liqu-form-ellipt}(i) may seem strong but 
in practice we assume that $(\bm{X}_t)$ is a process of
10-day returns so that the iid assumption, while unlikely to be true,
is less problematic than for daily financial returns. The assumption
is required in order to analyse convolutions of
elliptically distributed random vectors with different characteristic generators.

Assumption~\ref{assumption:liqu-form-ellipt}(ii) means that $\bm{X}_t = A\bm{Y}_t$ for some
matrix $A \in \R^{d \times d}$ satisfying $\Sigmanotcov = AA^\prime$ and
some random variable $\bm{Y}_t$
with characteristic function given by $\phi(\bm{s}) =
E(e^{i\bm{s}^\prime \bm{Y}_t}) =
\psi(\bm{s}^\prime\bm{s})$ for a function of a scalar variable $\psi$. $\bm{Y}_t$ is said to have a spherically
symmetric distribution, which is written $\bm{Y}_t \sim S_d(\psi)$. It is
important to note that $\Sigmanotcov$ is not the covariance
matrix of $\bm{X}_t$ unless the covariance matrix of $\bm{Y}_t$ is the
identity matrix; in general we have $\Sigma =  \var(Y)\Sigmanotcov$
where $Y \sim S_1(\psi)$. The class of elliptical distributions
contains a number of particular distributions which are popular models
for financial returns including the multivariate Student t and the
symmetric generalized hyperbolic distributions.
See~\citet{bib:fang-kotz-ng-90}
and~\citet{bib:mcneil-frey-embrechts-15} for further details of these
distributions. 

We need three key properties of an elliptical distribution for our
calculation. Let $\bm{X} \sim
E_d(\bm{0},\Sigmanotcov,\psi)$ and $\tilde{\bm{X}} \sim
E_d(\bm{0},\Sigmanotcov,\tilde{\psi})$ be independent elliptically-distributed
variables with the same dispersion matrix $\Sigmanotcov$ and possibly
different characteristic generators $\psi$ and $\tilde{\psi}$.
\begin{align}
&\bm{\beta}^\prime \bm{X}  \sim
E_1(0,\bm{\beta}^\prime\Sigmanotcov\bm{\beta},\psi) \quad \text{for $\bm{\beta}\in
\R^d$ and $\bm{\beta}\ne\bm{0}$.} \label{eq:6}\\
&\bm{X}  \sim
E_d(\bm{0},c\Sigmanotcov,\psi(s/c)) \quad \text{for any $c>0$.}\label{eq:7}\\
&\bm{X}+\tilde{\bm{X}} \sim E_d(\bm{0},\Sigmanotcov,\psi^*) \quad \text{where
    $\psi^*(s) = \psi(s)\tilde{\psi}(s)$.}\label{eq:8}
\end{align}
We will use~\eqref{eq:6} and~\eqref{eq:8} to find the characteristic
functions of elliptical ramdom vectors under linear combinations and
convolutions respectively. The property in~\eqref{eq:7} shows that we
have some discretion in how we represent the characteristic generator
of an elliptical random
variable in terms of its characteristic generator and its scaling. 

\begin{theorem}
Under Assumptions~\ref{assumption:just-basel-liqu}
and~\ref{assumption:liqu-form-ellipt} the loss $L$ in~\eqref{eq:L}
is a univariate spherical random variable $L \sim S_1(\psi_L)$ with
characteristic generator
\begin{equation}\label{eq:2}
\psi_L(s) = \prod_{k=1}^n \psi_k (s \bm{\beta}_k^\prime \Sigmanotcov \bm{\beta}_k),
\end{equation}
where
$\psi_k =
\psi^{h_k-h_{k-1}}$ for $k=1,\ldots,n$.

For $\alpha > 0.5$ the expected shortfall of $L$ is
related to the expected shortfall of
the variables $L^{(k)}$ in~\eqref{eq:20} by
\begin{equation}\label{eq:16}
  \ES_\alpha(L) =  \frac{c_{\alpha,\psi_L}}{c_{\alpha,\psi_1}}
\sqrt{ \sum_{k=1}^n \left(\sqrt{\frac{h_k - h_{k-1}}{h_1}}\ES_\alpha(L^{(k)})\right)^2}.
\end{equation}
where $c_{\alpha,\psi_L}$ represents the ratio of expected shortfall to standard
deviation for $L$ and $c_{\alpha,\psi_1}$ is the equivalent ratio for a
univariate spherical variable $Z \sim S_1(\psi_1)$.
\end{theorem}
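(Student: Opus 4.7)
The argument splits naturally into a distributional part establishing the characteristic generator of $L$, and an expected-shortfall part that combines this with equation~\eqref{eq:1}.

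\emph{Step 1 (distribution of each block $L_k$).} By Assumption~\ref{assumption:liqu-form-ellipt}(i) the vector $\bm{X}_{[h_{k-1},h_k]}$ is a sum of $h_k-h_{k-1}$ iid copies of $\bm{X}_t\sim E_d(\bm{0},\Sigmanotcov,\psi)$. Applying property~\eqref{eq:8} inductively yields $\bm{X}_{[h_{k-1},h_k]}\sim E_d(\bm{0},\Sigmanotcov,\psi^{h_k-h_{k-1}}) = E_d(\bm{0},\Sigmanotcov,\psi_k)$. Then property~\eqref{eq:6} gives $L_k = \bm{\beta}_k^\prime \bm{X}_{[h_{k-1},h_k]} \sim E_1(0,\bm{\beta}_k^\prime\Sigmanotcov\bm{\beta}_k,\psi_k)$, whose characteristic function is $t\mapsto \psi_k(t^2 \bm{\beta}_k^\prime\Sigmanotcov\bm{\beta}_k)$.

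\emph{Step 2 (characteristic generator of $L$).} Because the $\bm{X}_t$ are iid and the time intervals $[h_{k-1},h_k]$ are disjoint, the variables $L_1,\ldots,L_n$ are independent. The characteristic function of $L=\sum_{k=1}^n L_k$ is therefore the product
\begin{equation*}
\phi_L(t) \;=\; \prod_{k=1}^n \psi_k\!\bigl(t^2 \,\bm{\beta}_k^\prime\Sigmanotcov\bm{\beta}_k\bigr) \;=\; \psi_L(t^2),
\end{equation*}
with $\psi_L$ as defined in~\eqref{eq:2}. Since $\phi_L$ depends on $t$ only through $t^2$, $L$ is spherically symmetric, $L\sim S_1(\psi_L)$. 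This settles the first claim.

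\emph{Step 3 (from standard deviation to expected shortfall).} For any mean-zero elliptical random variable $V\sim E_1(0,\sigma^2,\psi_V)$ with finite variance, the translation-equivariance and positive-homogeneity of $\ES_\alpha$, together with the stochastic representation $V\stackrel{d}{=}\sigma Z$ for a standardised spherical $Z$, give $\ES_\alpha(V)=c_{\alpha,\psi_V}\sd(V)$, where $c_{\alpha,\psi_V}$ depends only on $\psi_V$ (and $\alpha$); this is the elliptical analogue of the Gaussian identity used in Section~\ref{sec:gaussian-case}. Applying it to $L$ yields $\ES_\alpha(L)=c_{\alpha,\psi_L}\sd(L)$. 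By the same reasoning applied to the variables $L^{(k)}$ in~\eqref{eq:20}, combined with the observation that $\bm{X}_{[0,h_1]}$ has characteristic generator $\psi^{h_1}=\psi_1$ and hence each $L^{(k)}\sim E_1(0,h_1\bm{\beta}_k^\prime\Sigmanotcov\bm{\beta}_k,\psi_1)$ shares the \emph{same} generator $\psi_1$, we get $\ES_\alpha(L^{(k)})=c_{\alpha,\psi_1}\sd(L^{(k)})$ for every $k$. Substituting $\sd(L^{(k)})=\ES_\alpha(L^{(k)})/c_{\alpha,\psi_1}$ into equation~\eqref{eq:1} and multiplying through by $c_{\alpha,\psi_L}$ produces~\eqref{eq:16}.

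\emph{Expected main obstacle.} The distributional step is essentially a bookkeeping exercise built on the three elliptical properties~\eqref{eq:6}--\eqref{eq:8}; the subtle point is the expected-shortfall step, which silently uses that the relevant spherical distributions have a finite variance so that $\sd(L)$ and $\sd(L^{(k)})$, and with them the ratios $c_{\alpha,\psi_L}$ and $c_{\alpha,\psi_1}$, are well defined and finite. This needs to be flagged as a standing hypothesis (which is satisfied for the Gaussian and generalized hyperbolic families of interest), and one must verify that the convolution $\psi_L$ inherits this finite second moment from $\psi$ so that $c_{\alpha,\psi_L}$ genuinely exists.
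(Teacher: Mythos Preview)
Your proof is correct and follows essentially the same route as the paper: obtain the generator of each block $L_k$ via~\eqref{eq:8} and~\eqref{eq:6}, multiply characteristic functions over the independent blocks to get $\psi_L$, and then convert~\eqref{eq:1} into~\eqref{eq:16} using the proportionality $\ES_\alpha = c_{\alpha,\cdot}\,\sd$ that holds for any scale family with a common generator. One small slip in Step~3: the dispersion parameter of $L^{(k)}$ is $\bm{\beta}_k^\prime\Sigmanotcov\bm{\beta}_k$, not $h_1\bm{\beta}_k^\prime\Sigmanotcov\bm{\beta}_k$, because~\eqref{eq:8} leaves the dispersion matrix unchanged and absorbs the $h_1$-fold convolution entirely into the generator $\psi_1=\psi^{h_1}$; this does not affect your argument, which relies only on the fact that all $L^{(k)}$ share the common generator $\psi_1$.
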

\begin{proof}
We
need to derive the distributions of
\begin{equation}
  \label{eq:9}
  L_k =
\bm{\beta}_k ^\prime \bm{X}_{[h_{k-1},h_k]}, \quad L = \sum_{k=1}^n
L_k \quad\text{and}\quad  L^{(k)} = \bm{\beta}_k^\prime \bm{X}_{[0,h_1]}.
\end{equation}
First note that if $\bm{X}_t \sim
E_d(\bm{0},\Sigmanotcov,\psi)$ then it follows from~\eqref{eq:5}
and~\eqref{eq:8} that $\bm{X}_{[h_{k-1},h_k]} \sim
E_d(\bm{0},\Sigmanotcov,\psi_k)$ where $\psi_k = \psi^{h_k-h_{k-1}}$. 
Using~\eqref{eq:6} we have that
\begin{displaymath}
  L_k \sim E_1\left(0, 
\bm{\beta}_k^\prime \Sigmanotcov \bm{\beta}_k, \psi_k \right) \quad
\text{and}\quad L^{(k)} \sim
E_1(0,\beta_k^\prime\Sigmanotcov\beta_k,\psi_1).
\end{displaymath}
Using~\eqref{eq:7} we write the former as $L_k \sim E_1\left(0, 
1, \psi_k (s \bm{\beta}_k^\prime \Sigmanotcov \bm{\beta}_k)\right)$
or   $L_k \sim S_1\left(\psi_k (s \bm{\beta}_k^\prime \Sigmanotcov
  \bm{\beta}_k)\right)$ and then
use the convolution property~\eqref{eq:8} to conclude that $L \sim
S_1(\psi_L)$ where $\psi_L$ is given in~\eqref{eq:2}.

Now $\ES_\alpha(L^{(k)}) = \sqrt{\bm{\beta}_k^\prime \Sigmanotcov
  \bm{\beta}_k}\ES_\alpha(Z)$ and $\sd(L^{(k)}) = \sqrt{\bm{\beta}_k^\prime \Sigmanotcov
  \bm{\beta}_k}\sd(Z)$ where $Z \sim S_1(\psi_1)$. Hence it follows that $\ES_\alpha(L^{(k)}) =
c_{\alpha,\psi_1}\sd(L^{(k)})$ for all $k$ and
\begin{align*}
  \ES_\alpha(L) = c_{\alpha,\psi_L} \sd(L) 
&=  c_{\alpha,\psi_L} \sqrt{ \sum_{k=1}^n \left(\sqrt{\frac{h_k -
  h_{k-1}}{h_1}}\sd(L^{(k)})\right)^2} \\
&=  c_{\alpha,\psi_L} \sqrt{ \sum_{k=1}^n \left(\sqrt{\frac{h_k -
  h_{k-1}}{h_1}}\frac{\ES(L^{(k)})}{c_{\alpha,\psi_1}}\right)^2} 
\end{align*}
which yields~\eqref{eq:16}.
\end{proof}

It may be easily verified that when $\psi(s) = \exp(-s/2)$ (the Gaussian
case), the characteristic function $\phi(s) = \psi_L(s^2)$ implied by~\eqref{eq:2} is the
characteristic function of the normal distribution in~\eqref{eq:3}. In
this case the constants $c_{\alpha,\psi_L}$ and $c_{\alpha,\psi_1}$
are identical.

% The key distributional statements may also be written in terms of
% univariate spherical distributions:
% \begin{displaymath}
%   L^{(k)}/\sqrt{\bm{\beta}_k^\prime \Sigmanotcov \bm{\beta}_k} \sim
%   S_1\left(\psi_1\right), \quad L \sim S_1(\psi_L).
% \end{displaymath}
% There exist positive,
% depending on the characteristic generators $\psi_L$ and $\psi_1$, such that
% $\ES_\alpha(L) = c_{\alpha,\psi_L}\sd(L)$ and. In general these constants will differ
% (except in the Gaussian case). It may be concluded that the
% general liquidity formula is

When the risk factors have a heavier-tailed distribution than normal
we expect that
$c_{\alpha,\psi_L} \leq c_{\alpha,\psi_1}$, due to the central limit
effect, so the Basel
liquidity formula should give an upper bound.
% \begin{displaymath}
%   \ES_\alpha(L) \leq
% \sqrt{ \sum_{k=1}^n \left(\sqrt{h_k - h_{k-1}}\ES_\alpha(L^{(k)})\right)^2}.
% \end{displaymath}

\section{Calculating the scaling ratio in practice}

We turn to the problem of calculating the ratio
$r_\alpha:= c_{\alpha,\psi_L}/c_{\alpha,\psi_1}$ when the underlying risk factors
have an elliptical distribution with generator $\psi$. 
To compute $c_{\alpha,\psi_1}$ we calculate the ratio
$\ES_\alpha(Z)/\sd(Z)$
for a univariate spherical random variable $Z$ with characteristic
generator $\psi_1 = \psi^{h_1}$.
To compute $c_{\alpha,\psi_L}$ we calculate the ratio
$\ES_\alpha(L)/\sd(L)$ for a univariate spherical variable $L$ with
characteristic generator given by~\eqref{eq:2}.

In general we will not be able to
calculate $\ES_\alpha(Z)$ and $\ES_\alpha(L)$ from the probability
densities of $Z$ and $L$, since these typically do not have simple closed
forms for the distributions of interest. In the following section we give results that can be
used to compute 
expected shortfall directly from the characteristic function of a spherical
random variable.

\subsection{Calculating expected shortfall by Fourier inversion}
A univariate spherical random variable $Y \sim S_1(\psi)$ is symmetric about the origin
with a real-valued even characteristic function given by $\phi_Y(s) :=
\psi(s^2)$. We give a general result that applies to univariate random variables
that are symmetric about the origin.

\begin{theorem}\label{theorem:fourier-inversion}
Let $Y$ be symmetrically distributed about the origin with an integrable characteristic function
$\phi_Y(s)$. Let $-\infty < a < b<\infty$. Then the following formulas hold:
\begin{align}
  f_Y(y) 
&= \frac{1}{\pi} \int_0^\infty \cos(sy) \phi_Y(s) {\rm d}
  s,\label{eq:density} \\
F_Y(y) &= \frac{1}{2} + \frac{1}{\pi} \int_0^\infty
         \frac{\sin(sy)}{s} \phi_Y(s){\rm d} s,\label{eq:df}\\
E(Y I_{\{a \leq Y\leq b\}}) &=\frac{1}{\pi} \int_0^\infty \frac{bs\sin(bs)+\cos(bs)-as\sin(as)-\cos(as)}{s^2}
  \phi_Y(s){\rm d} s\,.\label{eq:shortfall}
\end{align}
\end{theorem}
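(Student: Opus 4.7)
The plan is to derive all three formulas from the classical Fourier inversion theorem, exploiting the symmetry of $Y$ throughout. Since $\phi_Y$ is assumed integrable, the standard inversion theorem guarantees that $Y$ has a bounded continuous density $f_Y(y) = \frac{1}{2\pi}\int_{-\infty}^\infty e^{-isy}\phi_Y(s)\,ds$. Symmetry of $Y$ about zero forces $\phi_Y$ to be real-valued and even, so the imaginary part $-i\sin(sy)\phi_Y(s)$ integrates to zero while $\cos(sy)\phi_Y(s)$ is even in $s$. Folding the integral onto $[0,\infty)$ gives \eqref{eq:density} at once.

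For the distribution function \eqref{eq:df}, my plan is to note that continuity of the density together with symmetry yields $F_Y(0)=\tfrac12$, so for $y>0$ we have $F_Y(y) = \tfrac12 + \int_0^y f_Y(t)\,dt$. Substituting \eqref{eq:density} and switching the order of integration produces $F_Y(y) = \tfrac12 + \frac{1}{\pi}\int_0^\infty \phi_Y(s)\bigl(\int_0^y \cos(st)\,dt\bigr)ds$, and the inner integral equals $\sin(sy)/s$. The case $y<0$ follows from the same manipulation or from the fact that the right-hand side is odd in $y$ about the line $1/2$.

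For the truncated mean \eqref{eq:shortfall}, I would write $E(Y I_{\{a\le Y\le b\}})=\int_a^b y f_Y(y)\,dy$, substitute \eqref{eq:density}, swap the order of integration, and then compute $\int_a^b y\cos(sy)\,dy$ by a single integration by parts with $u=y$, $dv=\cos(sy)\,dy$, giving $u=y$, $v=\sin(sy)/s$. This produces $[y\sin(sy)/s+\cos(sy)/s^2]_a^b$, which after clearing an $s^2$ is exactly the numerator appearing in \eqref{eq:shortfall}.

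The main technical obstacle in all three arguments is justifying the interchange of integration by Fubini's theorem. For \eqref{eq:df} and \eqref{eq:shortfall} the outer integration is over a bounded interval and the integrand is dominated in absolute value by a constant times $|\phi_Y(s)|$ (for \eqref{eq:shortfall} the constant is $\max(|a|,|b|)$ since $|\cos|\le 1$); integrability of $\phi_Y$ on $(0,\infty)$ then makes the double integral of the absolute value finite, and Fubini--Tonelli applies. No deeper distributional or contour-integration machinery is required, so beyond these dominated-convergence style checks the derivation is mechanical.
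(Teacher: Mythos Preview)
Your proposal is correct and, for the density formula~\eqref{eq:density} and the truncated-mean formula~\eqref{eq:shortfall}, follows exactly the paper's argument: standard Fourier inversion plus the observation that $\phi_Y$ is real and even, and then for~\eqref{eq:shortfall} the same Fubini-then-parts computation with the same dominating bound $|y\cos(sy)\phi_Y(s)|\le \max(|a|,|b|)\,|\phi_Y(s)|$.

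The one point of difference is the distribution function~\eqref{eq:df}. The paper does not derive it but simply attributes the formula to the Gil-Pelaez inversion theorem. You instead prove it from scratch by writing $F_Y(y)=\tfrac12+\int_0^y f_Y(t)\,\rd t$, inserting~\eqref{eq:density}, and swapping the order of integration (legitimate since $|\cos(st)\phi_Y(s)|\le|\phi_Y(s)|$ is integrable on $[0,y]\times[0,\infty)$). Your route is more self-contained and keeps the whole theorem elementary, at the cost of a short extra Fubini check; the paper's citation is terser but imports an external result. Either is perfectly adequate here.
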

\begin{proof}
The
characteristic function $\phi_Y(s)$ of a random variables that is
symmetric about the origin is real-valued and even. If $\phi_Y$ is integrable then the density exists and the
standard Fourier inversion formula for the characteristic formula yields
\begin{displaymath}
  f_Y(y) = \frac{1}{2\pi} \int_{-\infty}^\infty e^{-isy} \phi_Y(s)
  {\rm d} s = \frac{1}{\pi} \int_0^\infty \cos(sy) \phi_Y(s) {\rm d} s.
\end{displaymath}
The formula~\eqref{eq:df} for the distribution function is obtained
from a
well-known representation of the distribution by~\citet{bib:gil-pelaez-51}.
% Moreover, for $y \geq 0$, the distribution function of $Y$, $F_Y(y)$, is given by
% \begin{align*}
%   F_Y(y) = \frac{1}{2}+ \int_0^y f_Y(y) {\rm d} y & =1/2 + \lim_{S \to
%   \infty}\frac{1}{\pi} \int_0^y \int_0^S \cos(sy) \psi(s^2) {\rm d} s
%                                            {\rm d} y \\
% &=\frac{1}{2} + \lim_{S \to
%   \infty}\frac{1}{\pi} \int_0^S \frac{\sin(sy)\psi(s^2)}{s} {\rm d} s
%   \\
% &= \frac{1}{2} + \frac{1}{\pi} \int_0^\infty \frac{\sin(sy)\psi(s^2)}{s} {\rm d} s\,.
% \end{align*}
To derive~\eqref{eq:shortfall} we observe that
\begin{align*}
  \int_a^b y f_Y(y) &= \frac{1}{\pi}\int_a^b \int_0^\infty y \cos(sy)
  \phi_Y(s){\rm d}s {\rm d}y \\
% &= \frac{1}{\pi} \int_a^b \lim_{S \to \infty} \int_0^S y \cos(sy)
%   \psi(s^2){\rm d}s {\rm d}y \\
&= \frac{1}{\pi} \int_0^\infty\left( \int_a^b y\cos(sy)
 {\rm d}y  \right)\phi_Y(s){\rm d}s 
\end{align*}
by Fubini's Theorem
since $|y\cos(sy)\phi_Y(s)|\leq |y||\phi_Y(s)|$ and the latter is
integrable on $[a,b]\times[0,\infty)$.
The inner integral can be solved by parts to obtain
\begin{displaymath}
  \int_a^b y\cos(sy)
 {\rm d}y  =  \frac{bs\sin(bs)+\cos(bs)-as\sin(as)-\cos(as)}{s^2}
\end{displaymath}
and~\eqref{eq:shortfall} follows.
% and hence we obtain
% \begin{displaymath}
%    \int_a^b y f_Y(y) =\frac{1}{\pi} \int_0^\infty
%   \frac{bs\sin(bs)+\cos(bs)-as\sin(as)-\cos(as)}{s^2}\psi(s^2){\rm d} s.
% \end{displaymath}
% from which the formula follows.
\end{proof}
These formulas permit the accurate evaluation of $\VaR_\alpha(Y)$ and expected
shortfall using one-dimensional integration. Calculation of
$\VaR_\alpha(Y)$ for $\alpha > 0.5$ is accomplished by numerical root finding
using~\eqref{eq:df}. If $\E|Y| < \infty$ for the
distribution in question, then expected shortfall is defined and it can be calculated by setting
$a=\VaR_\alpha(Y)$ and computing the limit
\begin{equation}\label{eq:12}
  \ES_\alpha(Y) = \lim_{b\to\infty}\frac{1}{\pi(1-\alpha)} \int_0^\infty \frac{bs\sin(bs)+\cos(bs)-as\sin(as)-\cos(as)}{s^2} \phi_Y(s){\rm d} s\,.
\end{equation}
Our experiments confirm that calculating the integral in~\eqref{eq:12}
for increasing $b$ does result in stable limiting values for
$\ES_\alpha(Y)$ which agree to a high level of accuracy with
theoretical values for well-known
distributions such as Student t.

\subsection{The case of generalized hyperbolic
  distributions}\label{sec:examples}

We will apply Theorem~\ref{theorem:fourier-inversion} to the family of
symmetric generalized hyperbolic (GH) distributions. This is a very
popular family for modelling financial returns and there are many useful
sources for the properties of these distributions 
including~\cite{bib:barndorff-nielsen-78},~\citet{bib:barndorff-nielsen-blaesild-81},~\citet{bib:eberlein-10}
and~\citet{bib:mcneil-frey-embrechts-15}. 

Let $\bm{Y}=(Y_1,\ldots,Y_d)^\prime$ have the
stochastic representation $\bm{Y}= \sqrt{W}\bm{V}$ where
$\bm{V}=(V_1,\ldots,V_d)^\prime$ is a vector of independent standard normal
variables and $W$ is an independent positive random variable with a
so-called generalized inverse Gaussian (GIG) distribution $W \sim N^{-}(\lambda,\chi,\kappa)$; see
formula~\eqref{eq:11} in the Appendix for the density of this distribution. The vector $\bm{Y}$ has a spherical distribution $\bm{Y} \sim
S_d(\psi)$, and any component $Y$ has a univariate spherical
distribution $Y \sim S_1(\psi)$, for a characteristic generator $\psi$ that depends on the particular
choice of the parameters $\lambda$, $\chi$ and $\kappa$. An elliptical
model of the kind described in
Assumption~\ref{assumption:liqu-form-ellipt}(ii) is obtained by taking
$\bm{X} = A \bm{Y}$ for $A \in \R^{d \times d}$ and satisfies $\bm{X} \sim E_d(\bm{0},\Omega,
\psi)$ where $\Omega = A A^\prime$. $\bm{X}$ is said to have a
$d$-dimensional symmetric generalized hyperbolic (GH) distribution.

To carry out our calculations it suffices to consider the single
component $Y$.
% which has a univariate GH distribution.
%  $\lambda$, $\chi$ and $\kappa$ satisfying $\chi>0,
% \kappa\geq 0$ if $\lambda<0$, $\chi>0, \kappa>0$ if $\lambda=0$ and
% $\chi\geq0, \kappa>0$ if $\lambda>0$. 
The variance of $Y$ satisfies $\var(Y) =
\E(W)$ and an explicit formula for the case where $\chi>0$ and $\kappa>0$
is given in~\eqref{eq:18}.  A formula for the characteristic
function $\phi_Y$ is given in~\eqref{eq:14} and the characteristic
generator of the elliptical family can be inferred from the identity $\psi(s^2) = \phi_Y(s)$.

We consider four special one-parameter cases of this distribution
resulting from particular choices of the parameters $\lambda$, $\chi$
and $\kappa$ of the GIG distribution:
\begin{enumerate}
\item The student t distribution with degree of freedom $\nu$. This corresponds to the case where
  $\kappa=0$, $\lambda = -\nu/2$ and $\chi=\nu$ or where $W$ has an
  inverse gamma distribution $W \sim \text{IG}(\nu/2,\nu/2)$. In this case
$\var(Y) = \nu/(\nu-2)$, provided $\nu>2$, and the characteristic
  function is given by~\eqref{eq:13} in the Appendix.
\item The variance gamma (VG) distribution. This corresponds to the case where
  $\chi=0$ or where $W$ has a
  gamma distribution $W \sim \text{Ga}(\lambda,\kappa/2)$. Without
  loss of generality we set the scaling parameter $\kappa=2$ so that
$\var(Y)=\lambda$.
The corresponding characteristic function is given by~\eqref{eq:167}.
\item The normal-inverse-Gaussian (NIG) distribution. This corresponds to the case where
  $\lambda = -1/2$. The distribution can be reparameterized in terms
  of $\alphanew = \sqrt{\chi\kappa}$ and $\chi$; the
  latter parameter can be treated as a scaling parameter and set to
  one. The variance is then $\var(Y) = \alphanew^{-1}$ and the
  characteristic function is given by~\eqref{eq:17}.
\item The hyperbolic (Hyp) distribution. This corresponds to the case where
  $\lambda = 1$. The distribution can be reparameterized in exactly the same
  way as the NIG distribution. The variance is
  $\var(Y)=\alphanew^{-1}K_2(\alphanew)/K_1(\alphanew)$ and the 
characteristic function is given by~\eqref{eq:177}.
 
\end{enumerate}

\subsection{Summary of the steps in the calculation}
We return to the problem of calculating the scaling ratios
$r_\alpha = c_{\alpha,\psi_L}/c_{\alpha,\psi_1}$ in~\eqref{eq:16} when the
underlying risk-factor returns have symmetric distributions in the
multivariate generalized hyperbolic family.

We recall the basic components that are required for the calculation: $Y \sim S_1(\psi)$ is spherically distributed with known standard
  deviation $\sd(Y)$ and known characteristic
  function $\phi_Y(s) = \psi(s^2)$; $Z\sim S_1(\psi_1)$ where
  $\psi_1 = \psi^{h_1}$; $L \sim
  S_1(\psi_L)$ where $\psi_L$ is given in~\eqref{eq:2}.
 The steps are:
\begin{enumerate}
\item Calculate $\ES_\alpha(Z)$ using~\eqref{eq:12} and $\phi_Z(s) = \phi_Y^{h_1}(s)$. 
\item Calculate $\sd(Z)=\sqrt{h_1}\sd(Y)$. 
\item Hence calculate $
c_{\alpha,\psi_1} = \ES_\alpha(Z)/\sd(Z)$.
\item Calculate $\ES_\alpha(L)$ using~\eqref{eq:12} and the fact that
  \begin{displaymath}
    \phi_L(s) = \prod_{k=1}^n
    \phi_Y^{h_k-h_{k-1}}\left(s\sqrt{\bm{\beta}_k^\prime \Sigmanotcov\bm{\beta}_k}\right).
  \end{displaymath}
\item Calculate $\sd(L)$ using the formula
  \begin{displaymath}
    \sd(L) = \sd(Y)\sqrt{  \sum_{k=1}^n (h_k-h_{k-1}) \bm{\beta}_k^\prime \Sigmanotcov\bm{\beta}_k}.
  \end{displaymath}
\item Hence calculate $ c_{\alpha,\psi_L}= \ES_\alpha(L)/\sd(L)$.
\item Hence calculate the ratio $r_\alpha = c_{\alpha,\psi_L}/c_{\alpha,\psi_1}$.
\end{enumerate}

\section{Results}

\subsection{Design of experiments}

In order to calibrate our model distributions, we use 2132
observations of adjusted daily closing prices for the S\&P500 index,
from 17.7.2007 to 31.12.2015, which have been converted to two-weekly
log-returns (conforming approximately to 10 trading days, the base liquidity horizon required under FRTB).

We fit the various distributions discussed in
Section~\ref{sec:examples}  to the 10-day return data using the R package \texttt{ghyp}.
Table \ref{tab:param} gives the estimated shape
parameters for the distributions of interest; scale parameters are
not required in our analysis. Note that we also
confirm that the calculations for the Gaussian case yield a ratio of 1,
as a check on our implementation.

\begin{table}[htbp]
  \centering
  \begin{tabular}{*{1}{l}*{2}{r}*{1}{l}}
    \toprule
    Distribution \textbar\ Parameters & \multicolumn{1}{c}{$\lambda$} &
                                                                 \multicolumn{1}{c}{$\alphanew$} & Remarks\\
    \midrule
 %   Gauss &  &  & Limit as\\
    t & -1.46 &  & $\nu=-2\lambda$\\
    NIG & -0.5 & 0.49 & $\lambda$ fixed\\
    Hyp & 1 & 0.11 & $\lambda$ fixed \\
    VG & 0.95 & & $\kappa=2$ \\
    \bottomrule
  \end{tabular}
\caption{Distribution parameters used in the calculation
  experiments. These have been derived by fitting these distributions
  to two-weekly log-returns of the S\&P500 index over the period from 17.07.2007 to 31.12.2015.}
\label{tab:param}  
\end{table}

We carry out two experiments: 
\begin{itemize} 
\item In the first, we consider two risk factors, one in $B_1$ with  a
  liquidity horizon of 10 days ($h_1=1$) and the other in $B_2$ with a
  liquidity horizon of 20 days ($h_2=2$). The dispersion matrix
  $\Sigmanotcov$ is either taken to be the identity $\Sigmanotcov =
  I_2$ (no correlation) or a correlation matrix with correlation $\rho=0.5$.  
\item The second experiment follows in the same fashion but we assume
  there are 5 risk factors with liquidity horizons 10, 20, 40, 60 and 120 days
  ($h_1=1, h_2=2, h_3=4, h_4=6, h_5=12$). We consider both the case
  where $\Sigmanotcov =
  I_5$ and the case where $\Sigmanotcov$ is an equicorrelation matrix
  with element $\rho=0.5$.
\end{itemize} 

We present values of $c_{\alpha,\psi_{1}}$, $c_{\alpha,\psi_L}$ as
well as the scaling ratio $r_\alpha$ for various confidence levels $\alpha$. 
The case of two risk factors is reported in Table~\ref{tab: exp1
  bck=2} and the case of five risk factors is reported in Table~\ref{tab: exp1 bck=5}.

\subsection{Results}

In both tables it is clear that the scaling ratios are less than 
one for
all non-Gaussian cases meaning that the Basel liquidity formula is
indeed conservative when the risk factors have a multivariate
elliptical distribution from one of the four generalized hyperbolic
sub-families considered in Section~\ref{sec:examples} and
Table~\ref{tab:param}.

The second experiment with five liquidity buckets leads in general to
smaller values for the scaling ratios than the first experiment with
two buckets. Thus the degree of conservatism of the formula
increases with the number of liquidity buckets. This is in line with
the increase in the central limit effect as we aggregate over more
time periods.

Introducing correlation leads to an increase in the constants
$c_{\alpha,\psi_L}$ and hence an increase in the scaling ratio. In
other words,  the weaker the correlation, the more conservative the
liquidity formula. To understand why this is the case, note that the constants $c_{\alpha,\psi_L}$ 
depend on the
characteristic generator $\psi_L$ in (\ref{eq:2}) and hence on the
set of values $\{\bm{\beta}_k^\prime \Sigmanotcov
\bm{\beta}_k,\;k=1,\ldots,n\}$. By considering formula~\eqref{eq:10} we
can think of these as
the relative weights attached to each of the $n$ liquidity buckets.  When $\rho=0$ these weights are $(5, 4, 3, 2, 1)$ but
when $\rho=0.5$ they are $(15, 10, 6, 3, 1)$. 
The intuition is that, in the second case, the first few liquidity buckets dominate more in the
convolution calculation and the central limit effect is mitigated.

Considering the different generalized hyperbolic special cases we see that the ratios are
usually largest for the t distribution followed by the other three
distributions; the exact ordering depends on the confidence level
$\alpha$ used in the calculation. In other words, use of the
Basel liquidity formula is least conservative in the case of t and more
conservative for the other distributions.

When we look at the confidence level of $\alpha=0.975$  which is the
level used in the new capital standard~\citep{bib:basel-16} the normal
inverse Gaussian (NIG) distribution leads to the highest level of
conservatism. This distribution is often a plausible model in market
risk applications. The ratio in the case where $n=5$ and $\rho=0$ is 0.837
which means that the Basel liquidity formula would tend to overstate
capital by around 19.4\%.

\begin{table}[htbp]
  \centering
  \begin{tabular}{*{2}{l}*{6}{r}}
    \toprule
     & \( \alpha \) & \multicolumn{2}{c}{0.95} & \multicolumn{2}{c}{0.975} & \multicolumn{2}{c}{0.99} \\
    \cmidrule(lr){3-4} \cmidrule(lr){5-6} \cmidrule(lr){7-8}
    Model & Quantity \textbar\ \( \rho \) & \multicolumn{1}{c}{0} & \multicolumn{1}{c}{0.5} & \multicolumn{1}{c}{0} & \multicolumn{1}{c}{0.5} & \multicolumn{1}{c}{0} & \multicolumn{1}{c}{0.5} \\
    \midrule
    Gauss & $c_{\alpha,\psi_1}$ & 2.063 & 2.063 & 2.338 & 2.338 & 2.665 & 2.665 \\
    & $c_{\alpha,\psi_L}$ & 2.063 & 2.063 & 2.338 & 2.338 & 2.665 & 2.665 \\
    & $r_\alpha$ & 1.000 & 1.000 & 1.000 & 1.000 & 1.000 & 1.000 \\ \addlinespace[3pt]
    t & $c_{\alpha,\psi_1}$ & 2.223 & 2.223 & 2.906 & 2.906 & 4.065 & 4.065 \\
    & $c_{\alpha,\psi_L}$ & 2.212 & 2.169 & 2.831 & 2.671 & 3.868 & 3.486 \\
    & $r_\alpha$ & 0.995 & 0.975 & 0.974 & 0.919 & 0.952 & 0.858 \\ \addlinespace[3pt]
    VG & $c_{\alpha,\psi_1}$ & 2.345 & 2.345 & 2.841 & 2.841 & 3.509 & 3.509 \\
    & $c_{\alpha,\psi_L}$ & 2.247 & 2.132 & 2.670 & 2.468 & 3.225 & 2.891 \\
    & $r_\alpha$ & 0.958 & 0.909 & 0.940 & 0.869 & 0.919 & 0.824 \\ \addlinespace[3pt]
    Hyp & $c_{\alpha,\psi_1}$ & 2.330 & 2.330 & 2.816 & 2.816 & 3.459 & 3.459 \\
    & $c_{\alpha,\psi_L}$ & 2.237 & 2.128 & 2.653 & 2.459 & 3.194 & 2.877 \\
    & $r_\alpha$ & 0.960 & 0.913 & 0.942 & 0.873 & 0.923 & 0.832 \\ \addlinespace[3pt]
    NIG & $c_{\alpha,\psi_1}$ & 2.374 & 2.374 & 2.976 & 2.976 & 3.832 & 3.832 \\
    & $c_{\alpha,\psi_L}$ & 2.296 & 2.167 & 2.801 & 2.544 & 3.502 & 3.042 \\
    & $r_\alpha$ & 0.967 & 0.913 & 0.941 & 0.855 & 0.914 & 0.794 \\
    \bottomrule
  \end{tabular}
\caption{Constants $c_{\alpha,\psi_{1}}$, $c_{\alpha,\psi_L}$ and
  ratios $r_\alpha$ in the experiment with 2 risk factors.}
\label{tab: exp1 bck=2}
\end{table}

\begin{table}[htbp]
  \centering
  \begin{tabular}{*{2}{l}*{6}{r}}
    \toprule
     & \( \alpha \) & \multicolumn{2}{c}{0.95} & \multicolumn{2}{c}{0.975} & \multicolumn{2}{c}{0.99} \\
    \cmidrule(lr){3-4} \cmidrule(lr){5-6} \cmidrule(lr){7-8}
    Model & Quantity \textbar\ \( \rho \) & \multicolumn{1}{c}{0} & \multicolumn{1}{c}{0.5} & \multicolumn{1}{c}{0} & \multicolumn{1}{c}{0.5} & \multicolumn{1}{c}{0} & \multicolumn{1}{c}{0.5} \\
    \midrule
    Gauss & $c_{\alpha,\psi_1}$ & 2.063 & 2.063 & 2.338 & 2.338 & 2.665 & 2.665 \\
    & $c_{\alpha,\psi_L}$ & 2.063 & 2.063 & 2.338 & 2.338 & 2.665 & 2.665 \\
    & $r_\alpha$ & 1.000 & 1.000 & 1.000 & 1.000 & 1.000 & 1.000 \\ \addlinespace[3pt]
    t & $c_{\alpha,\psi_1}$ & 2.223 & 2.223 & 2.906 & 2.906 & 4.065 & 4.065 \\
    & $c_{\alpha,\psi_L}$ & 2.160 & 2.169 & 2.637 & 2.671 & 3.402 & 3.486 \\
    & $r_\alpha$ & 0.972 & 0.975 & 0.908 & 0.919 & 0.837 & 0.858 \\ \addlinespace[3pt]
    VG & $c_{\alpha,\psi_1}$ & 2.345 & 2.345 & 2.841 & 2.841 & 3.509 & 3.509 \\
    & $c_{\alpha,\psi_L}$ & 2.112 & 2.132 & 2.429 & 2.468 & 2.824 & 2.891 \\
    & $r_\alpha$ & 0.901 & 0.909 & 0.855 & 0.869 & 0.805 & 0.824 \\ \addlinespace[3pt]
    Hyp & $c_{\alpha,\psi_1}$ & 2.330 & 2.330 & 2.816 & 2.816 & 3.459 & 3.459 \\
    & $c_{\alpha,\psi_L}$ & 2.108 & 2.128 & 2.423 & 2.459 & 2.814 & 2.877 \\
    & $r_\alpha$ & 0.905 & 0.913 & 0.860 & 0.873 & 0.813 & 0.832 \\ \addlinespace[3pt]
    NIG & $c_{\alpha,\psi_1}$ & 2.374 & 2.374 & 2.976 & 2.976 & 3.832 & 3.832 \\
    & $c_{\alpha,\psi_L}$ & 2.142 & 2.167 & 2.492 & 2.544 & 2.942 & 3.042 \\
    & $r_\alpha$ & 0.902 & 0.913 & 0.837 & 0.855 & 0.768 & 0.794 \\
    \bottomrule
  \end{tabular}
\caption{Constants $c_{\alpha,\psi_{1}}$, $c_{\alpha,\psi_L}$ and
  ratios $r_\alpha$ in the experiment with 5 risk factors.}
\label{tab: exp1 bck=5}
\end{table}

\section{Conclusion}\label{sec:conclusion}

We have presented evidence that the Basel liquidity formula tends to
lead to conservative capital charges when financial risk factors come
from heavier-tailed elliptical distributions.

The Basel formula is clearly a heavily stylized formula and makes a
number of crude assumptions. We have concentrated on the effect
of changing the underlying distribution of the risk factors when
portfolio sensitivities are linear. However, there are other important effects
we have not considered which will have an influence on the ability of
the formula to capture risk. In particular, the true effect of
risk-factor changes on portfolio risk is likely to be highly non-linear
over the kind of time horizons we consider. Moreover, as we have
already noted, positive serial correlation between losses over
different sub-intervals $[h_{k-1},h_k]$ of the overall liquidity
horizon $[0,h_n]$ will tend to lead to a tendency towards
underestimation which may counteract the central limit effect.

It should also be noted that there are many further layers of conservatism 
built into the new system of risk charges for the trading book, such
as the requirement to calibrate the model to stress periods and the
requirement to adjust the calculation to understate the possible diversification
effects across risk factors.

Nonetheless it is important to be clear about the workings of the
formula and the extent to which it may be interpreted as a
principles-based approach to the measurement of market risk. Our study
should be understood as a contribution to the clarification of this issue.

%\newpage

  \renewcommand{\theequation}{A.\arabic{equation}}
  % % redefine the command that creates the equation no.
  \setcounter{equation}{0}  % reset counter 

% \appendix
% \numberwithin{equation}{section}
\section*{Appendix}\label{sec:char-gener-some}

A standardized univariate generalized hyperbolic random variable $Y$
has the
stochastic representation $Y= \sqrt{W}V$ where $V$ is a standard normal
variable and $W$ is an independent positive random variable with a
generalized-inverse-Gaussian (GIG) distribution. The density of the
latter is
\begin{equation}\label{eq:11}
f_W(w)=\frac{\chi^{-\lambda}(\sqrt{\chi\kappa})^{\lambda}}{2K_{\lambda}(\sqrt{\chi\kappa})}w^{\lambda-1}
\exp(-\tfrac{1}{2}(\chi w^{-1}+\kappa w)),\quad
\begin{cases}\chi >0,\kappa\geq0&\text{if $\lambda < 0$}\\
\chi>0, \kappa>0 &\text{if $\lambda=0$}\\
\chi\geq 0, \kappa>0 &\text{if $\lambda>0$}
\end{cases}
\end{equation}
where $K_\lambda$ denotes a Bessel function of the third kind. The characteristic function of $Y$ is given by
\begin{equation}\label{eq:19}
  \phi_Y(s) = \E\left(\E\left(\exp(is\sqrt{W}V) \mid W\right)\right) =
  \E\left(\exp(-\tfrac{1}{2}s^2W)\right) = \int_{0}^\infty
  e^{-\tfrac{1}{2}s^2 w}f_W(w) \rd w
\end{equation}
and the variance by $\var(Y) = \E(W)$.

We first consider the case where $\chi>0$ and $\kappa>0$. In this case the variance of $Y$ is 
\begin{equation}\label{eq:18}
\var(Y) = \left( \frac{\chi}{\kappa} \right)^{1/2} \frac{K_{\lambda+1}(\sqrt{\chi\kappa})}{K_{\lambda}(\sqrt{\chi\kappa})}
\end{equation}
and the characteristic function is
\begin{align}
  \phi_Y(s) &=\int_{0}^{\infty}e^{-\tfrac{1}{2}\left(\chi
              w^{-1}+(s^2+\kappa)w\right)}\frac{\chi^{-\lambda}\left(\chi\kappa\right)^{\lambda/2}}{2K_{\lambda}\left(\sqrt{\chi\kappa}\right)}x^{\lambda-1}{\rm
              d}w \nonumber \\
 &=\left(\frac{\kappa}{s^2+\kappa}\right)^{\lambda/2}\frac{K_{\lambda}\left(\sqrt{\chi(s^2+\kappa)}\right)}{K_{\lambda}\left(\sqrt{\chi\kappa}\right)}. \label{eq:14}
\end{align}
% By comparing the integrand with~\eqref{eq:11} it is
% straightforward to show  that
% \begin{equation}
 
% \end{equation}

We next consider the case of a Student t distribution which
corresponds to  $\kappa=0$, $\lambda = -\nu/2$ and $\chi=\nu$. In this
case $W$ has an
  inverse gamma distribution $W \sim \text{IG}(\nu/2,\nu/2)$ and
  $\var(Y)=\E(W) = \nu/(\nu-2)$, provided $\nu>2$. The characteristic
 function should be interpreted as the limit
  of~\eqref{eq:14} as $\kappa\to 0$.
Substituting the density of an inverse gamma distribution
into~\eqref{eq:19} yields
\begin{align}
  \phi_Y(s) &= 
 \int_0^\infty e^{-\tfrac{1}{2} s^2 w}
   \frac{(\frac{1}{2}\nu)^{\nu/2}}{\Gamma(\frac{1}{2}\nu)}
   w^{-\frac{\nu}{2}-1}e^{-\frac{1}{2} \nu w^{-1}} {\rm d} w
              \nonumber \\
% &=
%               \frac{(\frac{1}{2}\nu)^{\nu/2}}{\Gamma(\frac{1}{2}\nu)}
%               \int_0^\infty w^{-\frac{\nu}{2}-1} \exp(-\frac{1}{2}(s^2 w
%               + \nu w^{-1})) {\rm d} w
% =  \frac{(\frac{1}{2}\nu)^{\nu/2}}{\Gamma(\frac{1}{2}\nu)}
% \frac{2 K_{\nu/2}(\sqrt{\nu} s)}{\nu^{\nu/2}(\sqrt{\nu} s)^{-\nu/2}}
%and comparing the integrand with~\eqref{eq:11} shows that
%\begin{equation}
  &= \frac{(\nu s^2)^{\nu/4}}{2^{\nu/2-1}\Gamma(\frac{1}{2}\nu)}K_{\nu/2}(\sqrt{\nu s^2}). \label{eq:13}
%\end{equation}
\end{align}

The special case of variance gamma (VG) corresponds to 
  $\chi=0$; without loss of generality we set the scaling parameter $\kappa=2$. In this case $W$ has a
  gamma distribution $W \sim \text{Ga}(\lambda,1)$ and
  $\var(Y)=\E(W)=\lambda$. The characteristic
  function in this case should be interpreted as the limit
  of~\eqref{eq:14} as $\chi\to 0$. Substituting the density of a gamma distribution
$W\sim\text{Ga}(\lambda,1)$ for $f_W$
in~\eqref{eq:19} we obtain
\begin{align}
\phi_Y(s) &
          =\int_{0}^{\infty}e^{-\frac{1}{2}s^{2}w}\frac{w^{\lambda-1}e^{-
          w}}{\Gamma(\lambda)}{\rm d}w \nonumber \\
% &=\left(\frac{2}{2+s^2}\right)^\lambda\int_{0}^{\infty}\frac{(\frac{1}{2}(2+s^2))^{\lambda}w^{\lambda-1}e^{-\frac{1}{2}(2+s^2)
%           w}}{\Gamma(\lambda)}{\rm d}w
&= \left(1+\tfrac{1}{2}s^2\right)^{-\lambda}.   \label{eq:167}
\end{align}
% which yields
% \begin{equation}
%     \label{eq:167}
%     \phi_Y(s) = \left(1+\tfrac{1}{2}s^2\right)^{-\lambda},\quad s>0,
%   \end{equation}
% since the
% integral evaluates to one.

Two further special cases are the normal inverse Gaussian (NIG) and
hyperbolic distributions. In both cases we fix the parameter $\lambda$
and reparameterize the GH distribution in terms of
$\alphanew=\sqrt{\chi\kappa}$ and $\kappa$; the latter then appears
only
 as a scaling parameter and can be set to one. 

For the NIG
distribution $\lambda=-1/2$ and $\var(Y)=\alphanew^{-1}$. The identity 
  $K_\lambda(x)=K_{-\lambda}(x)$ can be used to infer that
  \begin{equation}
    \label{eq:17}
    \phi_Y(s) = \left(
\frac{\sqrt{\alphanew^2 + s^2}}{\alphanew}
\right)^{1/2}
    \frac{K_{1/2}\left(\sqrt{\alphanew^2 +
          s^2}\right)}{K_{1/2}\left(\alphanew \right)}.
  \end{equation}
For the hyperbolic (Hyp) distribution $\lambda=1$ and
$\var(Y)=\alphanew^{-1}K_2(\alphanew)/K_1(\alphanew)$. The
characteristic function is
\begin{equation}
    \label{eq:177}
    \phi_Y(s) = \left(
\frac {\alphanew}{\sqrt{\alphanew^2 + s^2}}
\right)
    \frac{K_{1}\left(\sqrt{\alphanew^2 +
          s^2}\right)}{K_{1}\left(\alphanew \right)}.
  \end{equation}

 \bibliographystyle{jf}  

%  \bibliography{Balter-McNeil-V6}
\newcommand{\noopsort}[1]{}

\end{document}